\RequirePackage{fix-cm}
\RequirePackage{amsmath}
\documentclass[sn-mathphys,Numbered]{sn-jnl}


\usepackage[utf8]{inputenc} 
\usepackage[T1]{fontenc}
\usepackage{url}
\usepackage{ifthen}
\usepackage{amssymb}
\usepackage{amsthm}
\usepackage{amsfonts}
\usepackage{marvosym}
\usepackage{mathrsfs}
\usepackage[title]{appendix}
\usepackage{graphicx}
\usepackage{indentfirst}
\usepackage{bm}
\usepackage{verbatim}
\usepackage{yfonts}
\usepackage{bbm}
\usepackage{multirow}
\usepackage{float,tabularx}
\usepackage{enumitem}
\usepackage{changepage}
\usepackage{adjustbox}
\usepackage{makecell}
\usepackage{framed}
\usepackage{varwidth}
\usepackage{xcolor}
\usepackage{hyperref}
\usepackage{ifsym}
\usepackage{etex}
\usepackage{textcomp}%
\usepackage{manyfoot}%
\usepackage{booktabs}%
\usepackage{algorithm}%
\usepackage{algorithmicx}%
\usepackage{algpseudocode}%
\usepackage{listings}%



\theoremstyle{definition}%
\newtheorem{theorem}{Theorem}
\newtheorem{prop}[theorem]{Proposition}%

\newtheorem{lem}[theorem]{Lemma}
\newtheorem{cor}[theorem]{Corollary}

\theoremstyle{definition}%
\newtheorem{rem}{Remark}%
\newtheorem{dfn}{Definition}%

\newcommand{\rk}{\text{rk}}
\newcommand{\colrk}{\text{colrk}}
\newcommand{\Gab}{\text{Gab}}

\newcommand{\F}{\mathbb{F}}

\newcommand{\circu}[2]{\text{Cir}_{#1}({#2})}

\newcommand{\Gpub}{G_{\sf pub}}

\newcommand{\ra}{\overset{\$}{\leftarrow}}

\newcommand{\size}{{\sf size}}
\newcommand{\Setup}{{\sf Setup}}
\newcommand{\Keygen}{{\sf KeyGen}}

\newcommand{\pk}{{\sf pk}}
\newcommand{\sk}{{\sf sk}}

\newcommand{\Encrypt}{{\sf Encrypt}}
\newcommand{\Decrypt}{{\sf Decrypt}}

\raggedbottom

\begin{document}

\title{On the Security and Design of Cryptosystems Using Gabidulin-Kronecker Product Codes} 

\author[1]{\fnm{Terry Shue Chien} \sur{LAU}}\email{terry.lau@mmu.edu.my}

\author[2]{\fnm{Zhe} \sur{SUN}}\email{zhe\_sun@mail.nankai.edu.cn}

\author*[3]{\fnm{Sook-Chin} \sur{YIP}} \email{scyip@mmu.edu.my}	

\author[4]{\fnm{Ji-Jian} \sur{CHIN}} \email{ji-jian.chin@plymouth.ac.uk}

\author[1]{\fnm{Choo-Yee} \sur{TING}} \email{cyting@mmu.edu.my}

\affil[1]{\orgdiv{Faculty of Computing and Informatics}, \orgname{Multimedia University}, \orgaddress{\city{Cyberjaya}, \postcode{63100}, \state{Selangor}, \country{Malaysia}}}

\affil[2]{\orgdiv{Chern Institute of Mathematics and LPMC}, \orgname{Nankai University}, \orgaddress{\city{Tianjin}, \postcode{300071}, \country{China}}}

\affil[3]{\orgdiv{Faculty of Engineering},  \orgname{Multimedia University}, \orgaddress{\city{Cyberjaya}, \postcode{63100}, \state{Selangor}, \country{Malaysia}}}

\affil[4]{\orgdiv{School of Engineering, Computing and Mathematics (Faculty of Science and Engineering)},  \orgname{University of Plymouth}, \orgaddress{\city{Drake Circus, Plymouth}, \postcode{PL 48AA}, \country{United Kingdom}}}


\abstract{This paper is a preliminary study on the security and design of cryptosystems using Gabidulin-Kronecker Product Codes. In particular, we point out the design impracticality of the system, and propose ways to improve it.}

\keywords{Code-based Cryptography; Post-quantum Cryptography; Public-key Encryption; Rank metric}

\maketitle


\maketitle

\section{GabKron Cryptosystem}
 In this section, we describe the specifications for the McEliece-type cryptosystem based on Gabidulin-Kronecker product codes, namely the GabKron cryptosystem.
 
\noindent $\Setup(\delta)$. For a given security level $\delta$, choose $q$, a power of prime, and positive integers $m$, $n$, $k$, $n_{1}$, $n_{2}$, $k_{1}$, $k_{2}$, $t$, $t_{2}$, $\lambda$ satisfying $k < n \leq m$, $n= n_{1}n_{2}$, $k = k_{1}k_{2}$, $t \leq n_{1}t_{2}, 1 \leq k_{1} \leq n_{1},1 \leq k_{2} < n_{2}$, $2 \leq \lambda$. Let $\mathcal{V}$ be a $\lambda$-dimensional subspace of $\F_{q^m}$. Output parameters $param = (q,m,n,k,t,\lambda,n_{1},n_{2},k_{1},k_{2},\lambda)$.

\noindent $\Keygen(q,m,n,k,t,\lambda,n_{1},n_{2},k_{1},k_{2},\lambda)$. Randomly generate the following:
\begin{enumerate}
\item A normal element $\alpha \ra \mathbb{F}_{q^m}$. Let $\bm{g_{1}} = (\alpha^{[n_{1}-1]},\alpha^{[n_{1}-2]},\ldots,\alpha)$. Form a generator matrix $G_{1} = Cir_{k_{1}}(\bm{g_{1}}) \in \mathcal{M}_{k_{1},n_{1}}(\mathbb{F}_{q^m})$. 

\item A matrix $G_{2} \ra \mathcal{M}_{k_{2},n_{2}}(\mathbb{F}_{q^m})$.
\item Construct the matrix $G = G_{1} \otimes G_{2}$, which forms a generator matrix of Gabidulin-Kronecker product code $\mathcal{C}$. 
\item Let $H_{2} \in \mathcal{M}_{n_{2}-k_{2},n_{2}}(\mathbb{F}_{q^m})$ be a parity-check matrix of $\mathcal{C}_{2}$. 

\item A vector $\bm{a} \ra \mathbb{F}_{q^{m}}^{n}$, which forms a $k$-partial circulant matrix $X = Cir_{k}(\bm{a})\in \mathcal{M}_{k,n}(\mathbb{F}_{q^m})$ satisfying $Colr_{q}(X) = t_{1}$ and $0 < t_{1} \leq \lfloor \frac{n_{2}-k_{2}}{2} \rfloor$. 
\item A vector $\bm{b} \ra \mathbb{F}_{q^{m}}^{n}$, which forms an invertible circulant matrix $P = Cir_{n}(\bm{b})\in \mathcal{M}_{n,n}(\mathbb{F}_{q^m})$ with all entries in $\mathcal{V}$.
\item If the matrix composed of the first $k$ columns of matrix $(G+X)P^{-1}$ is invertible, then there exists an invertible circulant matrix $S = Cir_{k}(\bm{c}) \in \mathcal{M}_{k,k}(\mathbb{F}_{q^m})$ generated by a vector $\bm{c} \in \mathbb{F}_{q^m}^{k}$ such that $S(G+X)P^{-1}$ be of the systematic form. Otherwise, repeat the above process. \\
Compute $G_{pub} = S(G+X)P^{-1}$ and output the key pairs $pk=G_{pub}$ and $sk=(\bm{g_{1}},H_{2},\bm{b},\bm{c})$.
\end{enumerate}
\noindent $\Encrypt(\bm{m},pk)$. Let $\bm{m} = (\bm{m}_{1},\ldots,\bm{m}_{k_{1}})$ $\in \mathbb{F}_{q^m}^{k}$ be plaintext message where $\bm{m}_{i} \in \mathbb{F}_{q^m}^{k_{2}}$ for $1 \leq i \leq k_{1}$. Randomly choose a vector $\bm{e} = (\bm{e}_{1},\ldots,\bm{e}_{n_{1}}) \in \mathbb{F}_{q^m}^{n}$ such that $rk_{q}(\bm{e}) = t = \lfloor \frac{n_{2}-k_{2}-2t_{1}}{2\lambda} \rfloor$. Compute $\bm{c} = \bm{m}G_{pub} + \bm{e} = \bm{m}S(G+X)P^{-1}+\bm{e}$ and output $\bm{c}$.

\noindent $\Decrypt(\bm{c},sk)$. Compute $\bm{c}^{'} = \bm{c}P$ and perform $\mathfrak{D}_{\mathcal{C}} (\cdot)$, the Gabidulin-Kronecker product decoding algorithm described in \cite[Section 3.2]{SZZF2024} on $\bm{c}P$ and output $\bm{m} = \mathfrak{D}_{\mathcal{C}} (\bm{c}')$.

\noindent $\textbf{Correctness:}$ The correctness of our encryption scheme relies on
the decoding capability of the Gabidulin-Kronecker product code $\mathcal{C}$. To be specific, suppose that $\mathcal{C}$ decodes $\bm{c}^{'}$ correctly, we have Decrypt$(sk,$ Encrypt$(\bm{m}, pk)) = \bm{m}$.

 The suggested parameters for GabKron Cryptosystem can be shown in Table \ref{tab:4}.
\begin{table}[htbp]
\caption{Parameters and public key size (in bytes).}
\small
\label{tab:4}   
\begin{tabular}{lllllllllllll}
\hline\noalign{\smallskip}
Scheme  & $n_{1}$ & $k_{1}$ & $n_{2}$ & $k_{2}$ & $q$ & $m$ & $n$ & $k$ & $t$ & $\lambda$  & $pk$ size & Security \\
\noalign{\smallskip}\hline\noalign{\smallskip}
GabKron-128  & 2 & 2 & 24 & 12 & 2 & 48 & 48 & 24 & 12 & 3  & 288 & 128\\
GabKron-192  & 2 & 2 & 38 & 19 & 2 & 76 & 76 & 38 & 16 & 3  & 722 & 192\\
GabKron-256  & 2 & 2 & 52 & 26 & 2 & 104 & 104 & 52 & 24 & 3  & 1352 & 256\\
\noalign{\smallskip}\hline
\end{tabular}
\end{table}

\section{Analysis on the Gabidulin-Kronecker Product Code} \label{modify-gab-kron}

\noindent In this section, we recall the design of Gabidulin-Kronecker product code defined in \cite{SZZF2024}, and further analyse its structure. We show that Gabidulin-Kronecker product code is a subcode of Gabidulin-block code.

\subsection{Encoding and Decoding}

\begin{dfn}[{\cite{SZZF2024}}]
  For $s = 1$ and $2$, let $\mathcal{C}_i$ be an $[n_s,k_s,d_s]$ Gabidulin code with generator matrix $G_s = \left[ g_{s,ij} \right]_{1 \leq i \leq k_s, 1 \leq j \leq n_s}$ with error correcting capability $t_s = \left\lfloor \frac{n_s - k_s}{2} \right\rfloor$. Then an $[n,k]$-Gabidulin-Kronecker product code $\mathcal{C}$ is generated by 
  \[ G = G_1 \otimes G_2 = \left[ \begin{array}{ccc}
  g_{1,11} G_2 & \ldots & g_{1,1n_1} G_2 \\
  \vdots & \ddots & \vdots \\
  g_{1,k_11} G_2 & \ldots & g_{1,k_1n_1} G_2 \\
\end{array}   \right]  \]
where $n=n_1n_2$ and $k=k_1k_2$.
\end{dfn}

Sun et al. \cite{SZZF2024} introduced a decoding algorithm for the Gabidulin-Kronecker product code by performing block decoding on the received corrupted vector $\bm{y} = \bm{c} + \bm{e}$ where $\bm{c}$ is a codeword of $G$ and $\bm{e}$ is an error satisfying $\rk (\bm{e}) \leq t_2$. Note that since $G$ is a linear code, there exists $\bm{m} = (\bm{m}_1, \ldots, \bm{m}_{k_1}) \in \F_{q^m}^k$ where $\bm{m}_i \in \F_{q^m}^{k_2}$ for $1 \leq i \leq k_1$ such that $\bm{c} = \bm{m}G$. We can divide $\bm{c}$, $\bm{e}$ and $\bm{y}$ into $n_1$ equal blocks of length $n_2$, i.e. 
\begin{align*}
  \bm{y} = (\bm{y}_1,\ldots,\bm{y}_{n_1}) &= (\bm{m}_1,\ldots,\bm{m}_{k_1}) \left[ \begin{array}{ccc}
  g_{1,11} G_2 & \ldots & g_{1,1n_1} G_2 \\
  \vdots & \ddots & \vdots \\
  g_{1,k_11} G_2 & \ldots & g_{1,k_1n_1} G_2 \\
\end{array}   \right]  + (\bm{e}_1,\ldots,\bm{e}_{n_1}) \\
  &= \left( \sum_{i=1}^{k_1} \bm{m}_i g_{1,i1} G_2 + \bm{e}_1, \ldots,  \sum_{i=1}^{k_1} \bm{m}_i g_{1,i n_1} G_2 + \bm{e}_{n_1} \right).
\end{align*}
Let $\bm{c}_j = \displaystyle \sum_{i=1}^{k_1} \bm{m}_i g_{1,ij} G_2$ for $1 \leq j \leq n_1$. By performing decoding of $\mathcal{C}_2$ on each $\bm{y}_j = \bm{c}_j + \bm{e}_j$, we can recover $\bm{c}_j$ and $\bm{e}_j$. Let $I = \{ j_1,\ldots,j_{k_1} \} \subseteq \{ 1,2,\ldots,n_1\}$ be any information set of $G_1$. If $\rk(\bm{e}_i) \leq t_2$ for $i \in I$, we obtain the following system
\begin{align*}
  \bm{c}_{j_1} &= \sum_{i=1}^{k_1} \bm{m}_i g_{1,ij_1} G_2 \\
  &\vdots \\
  \bm{c}_{j_{k_1}} &= \sum_{i=1}^{k_1} \bm{m}_i g_{1,ij_{k_1}} G_2. \\
\end{align*}
If the above linear system is of full rank, i.e. rank $=k=k_1k_2$, then the vector $\bm{m}$ can be calculated.

\subsection{Subcode of Gabidulin-block code}

We now prove some other results for the Gabidulin-Kronecker product code. The following lemma demonstrates some necessary conditions for a Gabidulin-Kronecker product code.

\medskip

\begin{lem} \label{suffcond}
  Suppose that $G = G_1 \otimes G_2$ is a generator matrix for an $[n,k]$-Gabidulin-Kronecker product code defined as above. Then the matrix $\bar{G}_1 = \left[ \begin{array}{ccc}
  g_{1,11} I_{k_2} & \ldots & g_{1,1n_1} I_{k_2} \\
  \vdots & \ddots & \vdots \\
  g_{1,k_11} I_{k_2} & \ldots & g_{1,k_1n_1} I_{k_2} \\
\end{array}   \right] \in \F_{q^m}^{k_1k_2 \times n_1 k_2}$ has $rank(\bar{G}_1) = k$.
\end{lem}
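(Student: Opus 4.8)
The plan is to observe that $\bar{G}_1$ is nothing but the Kronecker product $G_1 \otimes I_{k_2}$: reading off the displayed block structure, the block sitting in row-position $i$ and column-position $j$ equals $g_{1,ij} I_{k_2}$, which is precisely how $G_1 \otimes I_{k_2}$ is assembled. Hence the claim $\rk(\bar{G}_1) = k = k_1 k_2$ reduces to the standard multiplicativity of rank under Kronecker products.

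First I would simply apply the identity $\rk(A \otimes B) = \rk(A)\,\rk(B)$. Since $G_1$ is a generator matrix of an $[n_1,k_1]$ Gabidulin code, it has full row rank $\rk(G_1) = k_1$, and $\rk(I_{k_2}) = k_2$; multiplying gives $\rk(\bar{G}_1) = k_1 k_2 = k$, as required.

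Should a self-contained argument be preferred, I would instead check linear independence of the rows of $\bar{G}_1$ directly. Index its rows by pairs $(i,\ell)$ with $1 \le i \le k_1$, $1 \le \ell \le k_2$, so that the row $\bm{r}_{i,\ell}$ carries the entry $g_{1,ij}$ in the $\ell$-th coordinate of its $j$-th block and zeros in the remaining coordinates of that block. Assume $\sum_{i,\ell} c_{i,\ell}\,\bm{r}_{i,\ell} = \bm{0}$. Because $I_{k_2}$ is diagonal, the coordinate sitting in position $\ell'$ of block $j$ receives contributions only from rows with $\ell = \ell'$, giving $\sum_{i=1}^{k_1} c_{i,\ell'}\, g_{1,ij} = 0$ for every $j = 1,\ldots,n_1$. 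For each fixed $\ell'$ this says that the linear combination of the rows of $G_1$ with coefficients $c_{1,\ell'}, \ldots, c_{k_1,\ell'}$ is the zero vector; the full row rank of $G_1$ then forces $c_{i,\ell'} = 0$ for all $i$. Letting $\ell'$ run over $1,\ldots,k_2$ annihilates every coefficient, so the $k_1 k_2$ rows are linearly independent and $\rk(\bar{G}_1) = k$.

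The only point needing care is to recognise that the matrix whose full row rank drives the conclusion is $G_1$ (of rank $k_1$), not $G_2$; once the block pattern is correctly matched with $G_1 \otimes I_{k_2}$, the result is immediate. There is no substantive obstacle here — the lemma follows directly from the behaviour of rank under Kronecker products, and the self-contained version merely unwinds that formula by hand.
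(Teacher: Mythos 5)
Your proof is correct, but it takes a genuinely different route from the paper's. The paper never analyses $\bar{G}_1$ in isolation: it factors $G = \bar{G}_1\bar{G}_2$ (with $\bar{G}_2$ the block-diagonal matrix built from copies of $G_2$), invokes $\mathrm{rank}(G)=k$ because $G$ generates an $[n,k]$ code, and then argues that since $\bar{G}_2$ has full row rank $n_1k_2 \geq k$, the rank of the product is carried entirely by $\bar{G}_1$, forcing $\mathrm{rank}(\bar{G}_1)=k$. You instead recognize $\bar{G}_1 = G_1 \otimes I_{k_2}$ and obtain the rank directly from multiplicativity of rank under Kronecker products (or, in your self-contained variant, a coordinate-wise linear independence check), using only that $G_1$ has full row rank $k_1$ --- neither $G_2$ nor the hypothesis that $G$ has rank $k$ enters your argument. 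Your route is more elementary and arguably more robust: the paper's proof rests on the stated identity $\mathrm{rank}(G) = \min\{\mathrm{rank}(\bar{G}_1),\mathrm{rank}(\bar{G}_2)\}$, which is false as a general identity (for a product only $\leq$ holds); it is valid here only because $\bar{G}_2$ has full row rank, so right-multiplication by it preserves rank --- a justification the paper leaves implicit. What the paper's approach buys is the factorization $G = \bar{G}_1\bar{G}_2$ itself, which is reused immediately afterward to prove the corollary that the Gabidulin-Kronecker product code is a subcode of a Gabidulin-block code; your approach, while cleaner for the lemma alone, does not set up that subsequent argument.
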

\begin{proof}
  The matrix $G = G_1 \otimes G_2$ can be written in the following form:
\begin{align}
   G &= G_1 \otimes G_2 = \left[ \begin{array}{ccc}
  g_{1,11} G_2 & \ldots & g_{1,1n_1} G_2 \\
  \vdots & \ddots & \vdots \\
  g_{1,k_11} G_2 & \ldots & g_{1,k_1n_1} G_2 \\
\end{array}   \right] \notag \\
	&= \left[ \begin{array}{ccc}
  g_{1,11} I_{k_2} & \ldots & g_{1,1n_1} I_{k_2} \\
  \vdots & \ddots & \vdots \\
  g_{1,k_11} I_{k_2} & \ldots & g_{1,k_1n_1} I_{k_2} \\
\end{array}   \right] \left[ \begin{array}{ccc}
	G_2 & & \bm{0} \\
	& \ddots & \\
	\bm{0} & & G_2 \\
	\end{array} \right] := \bar{G}_1 \bar{G}_2 \label{defG1G2}
\end{align}
where $\bar{G}_1 \in \F_{q^m}^{k_1k_2 \times n_1 k_2}$ and $\bar{G}_2 \in \F_{q^m}^{n_1 k_2 \times n_1 n_2}$.

Since $G$ is an $[n,k]$-linear code, this implies that $rank(G) = k$. Recall that $rank(G) = \min \{rank(\bar{G}_1),rank(\bar{G}_2)\}$. As $\bar{G}_2$ is of diagonal form and $rank(G_2) = k_2$, then $rank(\bar{G}_2) = n_1 \times k_2$. On the other hand, $rank(\bar{G}_1) \leq k = k_1 k_2 \leq n_1 k_2 = rank(\bar{G}_2)$, we have $rank(\bar{G}_1) = k$.
\end{proof}

\begin{dfn}
  Let $n=n' n_2$, $k=n' k_2$ and $\rk(\bm{g}_i) = n_2$ where $\bm{g}_i \in \F_{q^m}^{n_2}$ for $1 \leq i \leq n'$. An $[n,k]$-Gabidulin-block code with components $\bm{g}_1,\ldots,\bm{g}_{n'}$ has generator matrix in the form of
  \[ G =  \left[ \begin{array}{ccc}
	\Gab(\bm{g}_1) & & \bm{0} \\
	& \ddots & \\
	\bm{0} & & \Gab(\bm{g}_{n'}) \\
	\end{array} \right]  \]
	where each $\Gab(\bm{g}_i)$ is an $[n_2,k_2]$-Gabidulin code.
\end{dfn}

\medskip

\begin{cor}
An $[n,k]$-Gabidulin-Kronecker product code $\mathcal{C}$ is a subcode of $[n,k]$-Gabidulin-block code $\mathcal{C}'$ with components $\bm{g}_2,\ldots,\bm{g}_{2}$.
\end{cor}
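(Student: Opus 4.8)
The plan is to extract the containment directly from the factorization $G = \bar{G}_1 \bar{G}_2$ recorded in \eqref{defG1G2}, so that essentially no new computation is needed beyond unwinding a definition. First I would identify the ambient code: the matrix $\bar{G}_2$ is block diagonal with $n_1$ diagonal blocks, each equal to the generator matrix $G_2$ of $\mathcal{C}_2 = \Gab(\bm{g}_2)$. Matching this against the definition of a Gabidulin-block code, $\bar{G}_2$ is precisely the generator matrix of the $[n,n_1 k_2]$-Gabidulin-block code $\mathcal{C}'$ whose $n_1$ components are all equal to $\bm{g}_2$. This identification is the only spot where I must unfold a definition rather than merely cite the preceding lemma.

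Next I would read off the subcode relation. Every codeword of $\mathcal{C}$ has the form $\bm{c} = \bm{m} G$ for some $\bm{m} \in \F_{q^m}^{k}$, and by \eqref{defG1G2} we may write $\bm{c} = (\bm{m}\bar{G}_1)\bar{G}_2 = \bm{m}'\bar{G}_2$ with $\bm{m}' := \bm{m}\bar{G}_1 \in \F_{q^m}^{n_1 k_2}$. Hence $\bm{c}$ lies in the row space of $\bar{G}_2$, which is exactly $\mathcal{C}'$; therefore $\mathcal{C} \subseteq \mathcal{C}'$, the desired containment. Put differently, the factorization already exhibits each generator (row) of $G$ as a linear combination of the generators of $\mathcal{C}'$, so nothing further is required.

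I do not expect a genuine obstacle, since the structural work was done in Lemma~\ref{suffcond}; the one point demanding care is that the symbol $k$ is overloaded between the two code families. For the Kronecker product code, $k = k_1 k_2$, whereas the block-code definition binds $k = n' k_2 = n_1 k_2$. Lemma~\ref{suffcond} gives $rank(\bar{G}_1) = k_1 k_2$ while $\bar{G}_1$ has $n_1 k_2$ columns, so $\mathcal{C}$ has dimension $k_1 k_2 \le n_1 k_2$ and sits inside $\mathcal{C}'$ with codimension $(n_1 - k_1)k_2$ --- properly whenever $k_1 < n_1$, with the two codes coinciding exactly when $k_1 = n_1$ (as in the suggested parameters, where $\bar{G}_1$ is square and invertible). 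I would therefore make these two meanings of $k$ explicit, recording $\mathcal{C}'$ with its own dimension $n_1 k_2$, so that the word ``subcode'' carries content rather than collapsing into an equality of codes.
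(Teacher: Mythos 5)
Your proof is correct and takes essentially the same route as the paper: factor $G = \bar{G}_1 \bar{G}_2$ as in \eqref{defG1G2}, write each codeword $\bm{c} = \bm{m}G$ as $(\bm{m}\bar{G}_1)\bar{G}_2$, and conclude that $\bm{c}$ lies in the code generated by $\bar{G}_2$, which is $\mathcal{C}'$. Your additional observation that $\mathcal{C}'$ actually has dimension $n_1 k_2$ --- so the label ``$[n,k]$'' in the corollary is accurate only when $k_1 = n_1$, and the containment is proper precisely when $k_1 < n_1$ --- is a valid point that the paper's own proof glosses over.
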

\begin{proof}
Let $\bm{c}$ be a codeword of $\mathcal{C}$, then there exists $\bm{m} \in \F_{q^m}^k$ such that $\bm{c} = \bm{m}G = \bm{m} \bar{G}_1 \bar{G}_2$. Let $\bm{\mu} = \bm{m} \bar{G}_1$, then $\bm{c} = \bm{\mu} \bar{G}_2$. Since $\bar{G}_2$ is a generator matrix for $\mathcal{C}'$, we have $\bm{c} \in \mathcal{C}'$.
\end{proof}

Noticeably that the Moore structure of $G_1$ in fact does not play any role in the decoding phase, and hence the absence of decoding algorithm for $\mathcal{C}_1$ in the decoding of $\mathcal{C}$. As such, we could consider $G_1$ to be any random code, as long as the sufficient condition in Lemma \ref{suffcond} is satisfied.  

\section{Analysis on the Design of GabKron Cryptosystem}

\subsection{On the choice of $X$ } \label{Xchoice}

In the $\Keygen$ of GabKron cryptosystem, Sun et al. did not specify on how to choose the vector $\bm{a}$ and matrix $X$. We show that such choice of $\bm{a}=(a_0,\ldots,a_{n-1}) \in \F_{q^m}^n$ such that $X = \circu{k}{\bm{a}}$, $\colrk_q (X) = t_1$ and $0 < t_1 \leq \left\lfloor \frac{n_2-k_2}{2} \right\rfloor$ is possible for certain parameters.

\medskip

For simplicity, we consider $n= n' t_1$. To construct such a matrix $X$ such that $\colrk_q (X) = t_1$, we can determine a matrix $W =[w_{ij}] \in \F_q^{n \times n}$ such that $X = \left[ Y \mid \bm{0}_{k \times (n-t_1)} \right]W$ where $Y \in \F_{q^m}^{k \times t_1}$. 

\medskip

Consider $\displaystyle W = \left[ \begin{array}{c}
  W_T \\
  \hline
  W_B
\end{array} \right]$ and $W_T = \left[ W_1 | \ldots | W_{n'} \right]$ where $W_i \in \F_{q^m}^{t_1 \times t_1}$ for $1 \leq i \leq n'$.
We randomly choose $\bm{y}_1 \ra \F_{q^m}^{t_1}$ and let $\bm{y}_1 = (y_{11},\ldots,y_{1t_1})$ be the first row of $Y$, then the first row of $X$ is
\begin{align}
  (a_0,a_{n-1},\ldots,a_1) &= \bm{y}_1 W_T = (y_{11},\ldots,y_{1t_1}) \left[ W_1 | \ldots | W_{n'} \right], \label{Xrow1}
\end{align}
thus obtaining $\bm{a}$.

\medskip

To compute $\bm{y}_2$, the second row of $Y$, we consider the second row of $X$:
\begin{align}
  (a_1, a_0,\ldots,a_2) &= \bm{y}_2 W_T = (y_{21},\ldots,y_{2t_1}) \left[ W_1 | \ldots | W_{n'} \right]. \label{Xrow2}
\end{align}
Note that $(a_1,a_0,\ldots,a_2)$ can also be computed by shifting 1 position of the equation (\ref{Xrow1}) to the right, i.e.
\begin{align}
  W'_T &= \left[ \begin{array}{c|c|c|c|ccc}
  w_{1,n't_1} & & & & w_{1,(n'-1)t_1+1} & \ldots & w_{1,n't_1-1} \\
  \vdots & W_1 & \ldots & W_6 &  \\
  w_{t_1,n't_1} & & & & w_{t_1,(n'-1)t_1+1} & \ldots & w_{t_1,n't_1-1} \\
\end{array}    \right] =: \left[ W'_1 | \ldots | W'_{n'} \right] \notag \\
  (a_1, a_0,\ldots,a_2) &= \bm{y}_1 W'_T = (y_{11},\ldots,y_{1t_1}) \left[ W'_1 | \ldots | W'_{n'} \right] \label{Xrow2-1}
\end{align}
Equating (\ref{Xrow2}) and (\ref{Xrow2-1}), we have
\begin{align*}
  \bm{y}_2 W_T &= \bm{y}_1 W'_T \\
  \bm{y}_2 \left[ W_1 | \ldots | W_{n'} \right] &= \bm{y}_1 \left[ W'_1 | \ldots | W'_{n'} \right].
\end{align*}
Assume that each $W_i$ and $W'_i$ are invertible, then we can compute $\bm{y}_2 = \bm{y}_1 W'_i W_i^{-1}$. This implies that the matrix $W$ needs to satisfy the conditions $W'_i W_i^{-1} = W'_j W_j^{-1}$ for $1 \leq i,j \leq n'$.

\medskip

We can repeat the steps above to compute $\bm{y}_j$ from $\bm{y}_{j-1}$ for $3 \leq j \leq k$.

\medskip

\begin{rem}
  The steps illustrated above may be generalized to other parameters for GabKron cryptosystem. However, as observed from the conditions $W'_i W_i^{-1} = W'_j W_j^{-1}$, the choices for such construction of $X$ may not be many.
\end{rem}

\subsection{On the Left Scrambler Matrix $S$} \label{matS}

Sun et al. claimed that there exists an invertible circulant matrix $S \in \F_{q^m}^{k \times k}$ such that $\Gpub = S[G+X]P^{-1}$ is of systematic form.  We will show that this statement is generally not true.

\medskip

\begin{prop} \label{Gpubsystem}
  Let $M = [M_1 | M_2] \in \F_{q^m}^{k \times n}$ such that $M_1 \in \F_{q^m}^{k \times k}$ is invertible. There exists an invertible circulant matrix $S$ such that $SM$ is of systematic form if and only if $M_1$ is a circulant matrix.
\end{prop}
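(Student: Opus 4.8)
The plan is to notice that the systematic-form requirement completely determines $S$, so the whole question reduces to a single structural fact about circulant matrices.

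First I would decode the definition of systematic form. Writing $SM = [\, SM_1 \mid SM_2 \,]$, the matrix $SM$ is in systematic form exactly when its leading $k \times k$ block is the identity, i.e. $SM_1 = I_k$. Because $M_1$ is invertible by hypothesis, this equation admits the unique solution $S = M_1^{-1}$. Hence an invertible circulant $S$ with $SM$ systematic exists if and only if $M_1^{-1}$ is circulant (it is automatically invertible). The proposition is therefore equivalent to the assertion that, for an invertible $A \in \F_{q^m}^{k \times k}$, the matrix $A$ is circulant if and only if $A^{-1}$ is circulant.

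The key step is thus the lemma that inversion preserves the circulant property. I would prove it using the standard description of circulant matrices as the polynomials in the basic circulant permutation matrix $\Pi$, equivalently via the ring isomorphism between $k \times k$ circulant matrices over $\F_{q^m}$ and $\F_{q^m}[x]/(x^k-1)$. Since $\Pi$ is non-derogatory (it has a cyclic vector, so its minimal and characteristic polynomials both equal $x^k-1$ in every characteristic), its centralizer in $\F_{q^m}^{k \times k}$ is precisely the set of polynomials in $\Pi$, namely the circulant matrices. Now if $A$ is circulant then $A\Pi = \Pi A$, and left- and right-multiplying by $A^{-1}$ yields $\Pi A^{-1} = A^{-1}\Pi$; so $A^{-1}$ commutes with $\Pi$ and is therefore circulant. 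Applying this to both $A$ and $A^{-1}$ (using $(A^{-1})^{-1} = A$) gives the stated equivalence.

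Assembling the pieces is then routine. For the ``if'' direction, if $M_1$ is circulant then $S := M_1^{-1}$ is a well-defined invertible circulant matrix and $SM = [\, I_k \mid M_1^{-1}M_2 \,]$ is systematic. For the ``only if'' direction, if some invertible circulant $S$ makes $SM$ systematic, then $S = M_1^{-1}$ by uniqueness, so $M_1 = S^{-1}$ is circulant because the inverse of a circulant matrix is circulant. The main obstacle is really just the closure of circulant matrices under inversion; the conceptual point it expresses is that one has \emph{no} freedom in the choice of $S$, so a non-circulant $M_1$ can never be repaired into systematic form by any circulant scrambler --- which is precisely why the characterization is both clean and tight.
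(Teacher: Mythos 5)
Your proof is correct and follows essentially the same route as the paper: both directions reduce to the forced choice $S = M_1^{-1}$ together with the closure fact that the inverse of an invertible circulant matrix is circulant. The only difference is that you actually prove this closure fact (via the centralizer of the cyclic shift $\Pi$, which is non-derogatory in every characteristic), whereas the paper simply asserts it.
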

\begin{proof}
  Suppose that $S$ is circulant and $SM$ is of systematic form, i.e. $SM = \left[ I_k | N \right]$ where $N=SM_2 \in \F_{q^m}^{k \times (n-k)}$. This implies that $M_1 = S^{-1}$ is circulant, since the inverse of a circulant matrix is circulant.
  
  \medskip
  
  If $M_1$ is circulant invertible matrix, then there exists a circulant matrix $S$ such that $SM_1 = I_k$, thus giving $SM$ in its systematic form $SM=[I_k | SM_2 ]$.
\end{proof}

\medskip

By the specifications in $\Keygen$, it is not guaranteed that there exist $k$ linearly independent columns of $[G+X]P^{-1}$ such that they are of circulant structure. Therefore, by Proposition \ref{Gpubsystem}, such left scrambler matrix $S$ with circulant invertible structure does not exist.

\subsection{On the Key Sizes}

\noindent \textbf{Public Key Size}: The public key $\pk$ of GabKron, $\Gpub$ is of systematic form. Note that there exists invertible matrix $S$ such that $\Gpub = S[G+X]P^{-1} = [I_k \mid N ]$ where $N \in \F_{q^m}^{k \times (n-k)}$, although $S$ may not be of circulant structure. This implies that the public key size should be $\size_\pk = \frac{mk(n-k)}{8} = \frac{mk_1k_2(n_1n_2 - k_1k_2)}{8}$ bytes.

\medskip 

\noindent \textbf{Secret Key Size}: The secret key $\sk$ of GabKron is $(\bm{g}_1, H_2, \bm{b}, \bm{c})$. Sun et al. proposed the used of a normal element $\alpha$ to generate the vector $\bm{g}_1$, which requires $m$ bits in storing $\alpha$. For $H_2$, there exists a vector $\bm{h}_2 \in \F_{q^m}^{n_2}$ such that $H$ is a Moore matrix generated by $\bm{h}_2$. Therefore it requires $mn_2$ bits to store $\bm{h}_2$. For $\bm{b}$, since $\rk(\bm{b}) = \lambda$, we can write $\bm{b} = (B_1,\ldots,B_\lambda)$ where $B_1,\ldots,B_\lambda \in \F_{q^m}$ and $M_B \in \F_q^{\lambda \times n}$, which requires $\lambda (m+n)$ bits for storage. Finally, the vector $\bm{c}$ is used to store the secret key $S$. As argued in Section \ref{matS}, $S$ should not be a circulant matrix, thus requiring $k \times k \times m$ bits for storage. Therefore, the secret key size should be
\[ \size_\sk = \frac{m(1+n_2 + \lambda + k^2 ) + \lambda n}{8} \text{ bytes}. \]

\subsection{On the Parameter Sets}

The parameters for GabKron seems very optimal as compared to other code-based encryption schemes at similar security level. For instance, GabKron requires only $288$ bytes for public key at 128-bit security level, much lower than its rank metric public-encryption scheme counterparts. However, we would like to show that parameters provided are in fact impractical.

\medskip

Recall that $\rk(\bm{e}) = t = \displaystyle \left\lfloor \frac{n_2-k_2-2t_1}{2\lambda} \right\rfloor$. We first note that the value of $t_1$ is not given in the parameter sets. Since $t_1 \geq 0$, we know that $t = \displaystyle \left\lfloor \frac{n_2-k_2-2t_1}{2\lambda} \right\rfloor \leq  \left\lfloor \frac{n_2-k_2}{2\lambda} \right\rfloor$. By considering the value of $n_2$, $k_2$, $\lambda$ and $t$ given in GabKron-128, GabKron-192 and GabKron-256, we notice that the inequality for $t$ is not satisfied:

\begin{table}[h!]
  \centering
   \begin{tabular}{|c|ccc|c|c|}
    \hline
      Scheme & $n_2$ & $k_2$ & $\lambda$ & $t$ & $\left\lfloor \frac{n_2-k_2}{2\lambda} \right\rfloor$ \\
      \hline
      GabKron-128 & 24 & 12 & 3 & 12 & 2 \\
      GabKron-192 & 38 & 19 & 3 & 16 & 3 \\
      GabKron-256 & 52 & 26 & 3 & 24 & 4 \\
    \hline
  \end{tabular}
\end{table}

\noindent  In other words, the three parameter sets provided in \cite{SZZF2024} are not practical and not realistic to achieve the desired security level.

\subsection{Repaired Parameter Sets}

By taking all the considerations above, we now propose some repaired parameter sets for GabKron cryptosystem.

\medskip

As discussed in Section \ref{Xchoice}, we choose $n = 7t_1$ and $k=3t_1$. Furthermore, since the decoding of $\mathcal{C}_1$ is not required in the decryption process, we choose $\mathcal{C}_1$ to be a random code with $k_1 = n_1$.
\begin{table}[h!]
  \centering
   \begin{tabular}{|c|cccc|ccccccc|cc|}
    \hline
      Scheme & $n_1$ & $k_1$ & $n_2$ & $k_2$ & $q$ & $m$ & $n$ & $k$ & $t$ & $t_1$ & $\lambda$ & $\size_\pk$ & {\sf Sec} \\
      \hline
      Rep-GabKron-128 & 2 & 2 & 105 & 35 & 2 & 211 & 210 & 70 & 9 & 7 & 3 & 258.5K & 128 \\
      Rep-GabKron-192 & 2 & 2 & 150 & 50 & 2 & 307 & 300 & 100 & 13 & 10 & 3 & 767.5K & 192  \\
      Rep-GabKron-256 & 2 & 2 & 165 & 55 & 2 & 331 & 330 & 110 & 14 & 11 & 3 & 1001.3K & 256  \\
    \hline
  \end{tabular}
\end{table}

\section{Improved GabKron Cryptosystem with Partial-Circulant-Block Structure}

In this section, we propose an improvement to utilize the design of GabKron cryptosystem by considering partial circulant block structure for the public key. Furthermore, we propose a new approach in choosing the right scrambler matrix $P$ so that the rank weight of error chosen can be increased as compared to the original GabKron.

\subsection{General Idea for Improved GabKron Cryptosystem}

Recall in the decryption algorithm:
\begin{align}
  \bm{c} = \bm{m}\Gpub + \bm{e} &= \bm{m}S [G+X] P^{-1} + \bm{e} \notag \\
  \bm{c}' = \bm{c}P &= \bm{m}S [G+X] + \bm{e}P  \notag  \\
  (\bm{c}'_1,\ldots,\bm{c}'_{n1}) &= \bm{m}S \left[ \begin{array}{ccc}
  g_{1,11} G_2 & \ldots & g_{1,1n_1} G_2 \\
  \vdots & \ddots & \vdots \\
  g_{1,k_11} G_2 & \ldots & g_{1,k_1n_1} G_2 \\
\end{array}   \right] + \bm{m}S \left[ \begin{array}{ccc}
  X_{11} & \ldots & X_{1n_1} \\
  \vdots & \ddots & \vdots \\
  X_{k_11} & \ldots & X_{k_1n_1}  \\
\end{array}   \right] \notag \\
  &\qquad + \bm{e} \left[ \begin{array}{ccc}
  P_{11} & \ldots & P_{1n_1} \\
  \vdots & \ddots & \vdots \\
  P_{n_11} & \ldots & P_{n_1n_1}  \\
\end{array}   \right] \notag \\
\Rightarrow \quad \bm{c}'_i &= \bm{m}S (g_{1,1i} + \ldots +  g_{1,k_1i}) G_2 + \bm{m}S \left[ \begin{array}{c}
  X_{1i} \\
  \vdots \\
  X_{k_1i}  \\
\end{array}  \right] + \bm{e} \left[ \begin{array}{c}
  P_{1i} \\
  \vdots \\
  P_{n_1i}  \\
\end{array}   \right] \notag
\end{align}
where $X_{ij} \in \F_{q^m}^{k_2 \times n_2}$ and $P_{ij} \in \mathcal{V}^{n_2 \times n_2}$. Since decryption is performed by block decoding, it suffices for us to focus on each block $\bm{c}'_i$ and improve the cryptosystem accordingly.

\medskip

Observe that the matrix $G_1$ acts as left scrambler for the matrix, therefore we can choose to omit the matrix $S$ as part of the left scrambler matrix. 

\medskip

Denote $X_{C_i}$ and $P_{C_i}$ as the $i$th column block of $X$ and $P$ respectively, i.e.
\[ X_{C_i} := \left[ \begin{array}{c}
  X_{1i} \\
  \vdots \\
  X_{k_1i}  \\
\end{array}  \right] \quad \text{and} \quad P_{C_i} := \left[ \begin{array}{c}
  P_{1i} \\
  \vdots \\
  P_{n_1i}  \\
\end{array}   \right]. \]
For block-decoding with respect to $\mathcal{C}_2$ to work, we need to ensure that 
\[ \rk( \bm{m}S X_{C_i})\leq t_1, \quad \rk(\bm{e}P_{C_i})  \leq t \lambda \quad \Rightarrow \quad  \rk( \bm{m}S X_{C_i}) + \rk(\bm{e}P_{C_i})  \leq t_2. \]

\medskip

\noindent Our idea of improvement for GabKron is as follows:

\medskip

\begin{enumerate}
\item Constructing $X_{C_i}$ such that each of the blocks has $\colrk_q(X_{C_i}) = t_1$. This will enable block-decoding and result in the whole matrix $X$ has $\colrk_q (X) \leq n_1 t_1$ (instead of $\colrk_q(X) \leq t_1$);
\item Constructing $P_{C_i}$ such that $\rk( \bm{e}P_{C_i}) \leq \lambda't$ where $\lambda' \leq \lambda$ (instead of $\rk( \bm{e}P_{C_i}) \leq \lambda t$) and $P \in \mathcal{V}^{n \times n}$;
\item Utilize the structure of partial-circulant-block matrices to optimize the key sizes.
\end{enumerate}

\medskip

\noindent The details of the constructions will be specified in sections later.

\subsection{Modification on $G$}

We now give the definition for circulant-block and partial-circulant-block matrices.

\medskip

\begin{dfn}
Let $n=n_1n_2$ and $k = k_1k_2$ where $1 \leq k_1 \leq n_1$ and $1 \leq k_2 \leq n_2$. A square block matrix $A  \in \F_{q^m}^{n \times n}$ be a square matrix
\[ A = \left[ \begin{array}{ccc}
    A_{11} & \ldots & A_{1n_1}  \\
    \vdots & \ddots & \vdots \\
    A_{n_11} & \ldots & A_{n_1n_1}  \\
\end{array} \right], \qquad A_{ij} \in \F_{q^m}^{n_2 \times n_2} \text{ for } \leq i, j \leq n_1 \]
is called a circulant-block matrix if each block $A_{ij}$ is circulant. We say that a matrix $B \in \F_{q^m}^{k \times n} $ is partial-circulant-block matrix if 
\[ B= \left[ \begin{array}{ccc}
    B_{11} & \ldots & B_{1n_1}  \\
    \vdots & \ddots & \vdots \\
    B_{k_11} & \ldots & B_{k_1n_1}  \\
\end{array} \right], \qquad B_{ij} \in \F_{q^m}^{k_2 \times n_2} \text{ for } \leq i \leq k_1, 1 \leq j \leq n_1,  \]
and each $B_{ij}$ is a $k_2$-partial circulant matrix.
\end{dfn}

\begin{lem}\cite[Lemma 1]{Rjasa94} \label{circublockinverse}
    The set of all invertible circulant-block matrices of the same dimension $n=n_1n_2$ is a group with respect to matrix multiplication.
\end{lem}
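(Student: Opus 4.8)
The plan is to verify the group axioms for the set $\mathcal{G}$ of invertible circulant-block matrices of size $n = n_1 n_2$. Associativity is inherited from matrix multiplication, and the identity $I_n$ is circulant-block (its diagonal blocks are $I_{n_2}$, which is circulant, and its off-diagonal blocks are $\bm{0}$, trivially circulant). Thus the entire content reduces to two closure statements: the product of two circulant-block matrices is again circulant-block, and the inverse of an invertible circulant-block matrix is again circulant-block (products and inverses of invertible matrices are automatically invertible as matrices). The first is the easy part; the second is where the work lies.

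The key structural observation I would exploit is that the set of $n_2 \times n_2$ circulant matrices over $\F_{q^m}$ is the image of the regular representation $\rho \colon R \to \mathcal{M}_{n_2}(\F_{q^m})$ of the commutative ring $R = \F_{q^m}[x]/(x^{n_2}-1)$, where $\rho(r)$ is the matrix of multiplication by $r$ in the basis $1, x, \ldots, x^{n_2-1}$. Since $\rho$ is an injective ring homomorphism, circulant matrices are closed under addition and multiplication, and applying $\rho$ entrywise yields a ring isomorphism $\phi \colon \mathcal{M}_{n_1}(R) \to \{\text{circulant-block matrices}\}$. Closure under multiplication and the presence of the identity are then immediate: a circulant-block matrix is exactly $\phi(\tilde A)$ for some $\tilde A \in \mathcal{M}_{n_1}(R)$, and $\phi$ respects products and sends $I_{n_1}$ to $I_n$.

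For the inverse, which is the main obstacle, I would show that $A = \phi(\tilde A)$ is invertible over $\F_{q^m}$ if and only if $\tilde A$ is invertible in $\mathcal{M}_{n_1}(R)$; then $A^{-1} = \phi(\tilde A^{-1})$ is automatically circulant-block. The bridge is the norm–determinant identity $\det_{\F_{q^m}}(\phi(\tilde A)) = N_{R/\F_{q^m}}(\det_R \tilde A)$, where $\det_R$ is the determinant over the commutative ring $R$ and $N_{R/\F_{q^m}}(r) = \det_{\F_{q^m}}(\rho(r))$ is the field norm. This is a universal polynomial identity in the entries of $\tilde A$, so it holds over $R$ regardless of the arithmetic of $x^{n_2}-1$. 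Combined with the elementary fact that $r \in R$ is a unit if and only if $\rho(r)$ is nonsingular (equivalently $N_{R/\F_{q^m}}(r) \neq 0$, since $R$ is a finite-dimensional $\F_{q^m}$-algebra, whence multiplication by $r$ is injective iff bijective iff $r$ is invertible), one obtains the chain $A \text{ invertible} \iff \det_{\F_{q^m}}(A) \neq 0 \iff N_{R/\F_{q^m}}(\det_R \tilde A) \neq 0 \iff \det_R \tilde A \text{ is a unit in } R \iff \tilde A \text{ is invertible over } R$. Since $\phi$ is a ring isomorphism onto its image, $A^{-1} = \phi(\tilde A^{-1})$ is circulant-block, closing the argument.

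The delicate point to get right is this invertibility criterion over the non-field ring $R$. A tempting shortcut is to simultaneously diagonalize all circulant blocks by the $n_2$-point DFT, reducing $A$ to a block-diagonal form $\bigoplus_\ell [\,\tilde a_{ij}(\omega_\ell)\,]$ indexed by the $n_2$-th roots of unity $\omega_\ell$, after which invertibility and circulant-ness of the inverse follow because being diagonalized by the DFT characterizes circulant matrices. This works cleanly only when the $\omega_\ell$ are distinct, i.e. when $\gcd(n_2,q)=1$, which fails precisely in the parameter regime of interest (for instance $q=2$ with $n_2$ even). I would therefore favour the ring-theoretic argument above, which sidesteps separability entirely and relies only on the regular representation together with the universal determinant identity.
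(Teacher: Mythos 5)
Your proof is correct, but there is nothing internal to the paper to compare it against: the paper does not prove this lemma at all, it imports it wholesale as \cite[Lemma 1]{Rjasa94}. So the real comparison is between your argument and the cited one. Rjasanow's lemma is formulated for circulant-block matrices over the complex numbers, in a numerical-linear-algebra setting where the standard tool is exactly the shortcut you flag and discard: simultaneous diagonalization of all circulant blocks by the discrete Fourier transform, which needs the $n_2$-th roots of unity to be distinct, i.e. $\gcd(n_2,q)=1$. Your route---identify circulant matrices with the regular representation $\rho$ of $R=\F_{q^m}[x]/(x^{n_2}-1)$, identify circulant-block matrices with $\mathcal{M}_{n_1}(R)$ via the entrywise map $\phi$, and transfer invertibility through the identity $\det_{\F_{q^m}}(\phi(\tilde A))=N_{R/\F_{q^m}}(\det_R \tilde A)$ (a commuting-blocks determinant identity) combined with the adjugate criterion over the commutative ring $R$---never invokes separability, hence is valid over any base field. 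This is not a cosmetic gain: the paper applies the lemma over $\F_{q^m}$ with $q=2$ and $n_2$ even ($n_2 = 90, 120, 128$ in the improved parameter sets), precisely the regime where the Fourier argument behind the complex-matrix lemma breaks down, so a characteristic-independent proof such as yours is what actually justifies the paper's claim in the public-key-size analysis that $P^{-1}$ is again circulant-block. One presentational remark: introduce $\phi$ as an injective ring homomorphism $\mathcal{M}_{n_1}(R)\to\mathcal{M}_n(\F_{q^m})$ whose image is exactly the set of circulant-block matrices (immediate, since the image of $\rho$ is exactly the set of circulant matrices); closure under multiplication and the existence of the identity then read off as the statement that this image is a subring, and calling $\phi$ an isomorphism onto the circulant-block matrices becomes a conclusion rather than a definition.
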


\begin{lem} \cite[Proposition 1]{LT19} \label{productpcircu}
   Let $P$ be a $k$-partial circulant matrix and $Q$ be a circulant matrix, then $PQ$ is a $k$-partial circulant matrix.
\end{lem}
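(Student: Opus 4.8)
The plan is to reduce the partial-circulant case to the classical fact that the full $n \times n$ circulant matrices are closed under multiplication. First I would record the structural observation that a $k$-partial circulant matrix is exactly the first $k$ rows of a full circulant matrix: writing $P = \circu{k}{\bm{p}}$ for some $\bm{p} \in \F_{q^m}^n$, we have $P = [I_k \mid \bm{0}_{k \times (n-k)}]\,\circu{n}{\bm{p}}$, where $\circu{n}{\bm{p}}$ is the full $n \times n$ circulant generated by $\bm{p}$ and $[I_k \mid \bm{0}]$ is the projection onto the first $k$ rows. This factorization is the only device the argument really needs.

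Next I would invoke the standard fact that the $n \times n$ circulant matrices over $\F_{q^m}$ form a commutative ring under matrix multiplication (the scalar case of Lemma \ref{circublockinverse}); in particular the product of the two full circulants $\circu{n}{\bm{p}}$ and $Q$ is again a full circulant, say $\circu{n}{\bm{p}}\,Q = \circu{n}{\bm{r}}$ for some $\bm{r} \in \F_{q^m}^n$. Combining this with the factorization above yields
\[
  PQ = [I_k \mid \bm{0}]\,\circu{n}{\bm{p}}\,Q = [I_k \mid \bm{0}]\,\circu{n}{\bm{r}} = \circu{k}{\bm{r}},
\]
which is precisely a $k$-partial circulant matrix, as claimed.

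If one prefers not to quote the ring structure as a black box, the same conclusion follows by arguing row by row. Writing $\sigma$ for the cyclic shift on $\F_{q^m}^n$, the $(i+1)$-th row of $P$ is $\sigma^{i}(\bm{p})$, and the load-bearing point is that right multiplication by a circulant $Q$ commutes with $\sigma$, i.e. $\sigma(\bm{v})Q = \sigma(\bm{v}Q)$ for every $\bm{v}$; this is immediate upon identifying vectors with polynomials modulo $x^n - 1$, under which $\sigma$ is multiplication by $x$ and right multiplication by $Q$ is multiplication by a fixed polynomial, so the two operations commute. Hence the $(i+1)$-th row of $PQ$ equals $\sigma^{i}(\bm{p})Q = \sigma^{i}(\bm{p}Q)$, so every row of $PQ$ is the corresponding cyclic shift of its first row $\bm{r} := \bm{p}Q$, which is exactly the definition of $PQ = \circu{k}{\bm{r}}$. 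The computation is routine; I expect the only point requiring care to be fixing the cyclic-shift convention (left versus right shift, and the indexing used in the paper's $\circu{\cdot}{\cdot}$ notation) consistently, so that ``first $k$ rows of a full circulant'' and ``successive shifts of the first row'' genuinely agree with the definition of a $k$-partial circulant matrix. Once the convention is pinned down, either route closes the proof.
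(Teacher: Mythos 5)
Your proof is correct, but there is nothing in the paper to compare it against: the paper does not prove this lemma at all, it simply quotes it as Proposition~1 of \cite{LT19} and uses it as a black box (e.g.\ in the proof of Lemma~\ref{circublockproduct}). Your argument is a valid self-contained justification of the cited fact: the factorization $P = [I_k \mid \bm{0}]\,\circu{n}{\bm{p}}$ identifying a $k$-partial circulant with the first $k$ rows of a full circulant, combined with closure of $n \times n$ circulants under multiplication (the ring $\F_{q^m}[x]/(x^n-1)$), immediately gives $PQ = [I_k \mid \bm{0}]\,\circu{n}{\bm{p}Q} = \circu{k}{\bm{p}Q}$; your row-by-row variant via commutation of the shift $\sigma$ with right multiplication by $Q$ is the same statement unwound. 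Your caution about the shift convention is warranted but harmless here: the paper's convention (visible in Section~\ref{Xchoice}, where $\circu{k}{\bm{a}}$ has rows that are successive cyclic shifts of $(a_0,a_{n-1},\ldots,a_1)$) still makes the set of full circulants a commutative matrix ring and still realizes the $k$-partial circulant as the top $k$ rows of the full one, so both of your routes go through unchanged.
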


\begin{lem} \label{circublockproduct}
    Let $n=n_1n_2$ and $k = k_1k_2$ where $1 \leq k_1 \leq n_1$ and $1 \leq k_2 \leq n_2$. Suppose that $A$ is a circulant-block matrix where each block $A_{ij} \in \F_{q^m}^{n_2 \times n_2}$ is a circulant matrix where $1 \leq i,j \leq n_1$. Let $B$ be a partial-circulant-block matrix where each block $B_{ij}$ is a $k_2$-partial circulant matrix where $1 \leq i \leq k_1$ and $1 \leq j \leq n_1$. Then the matrix $Q=BA$ is a partial-circulant-block matrix where each $Q_{ij}$ is a $k_2$-partial circulant matrix.
\end{lem}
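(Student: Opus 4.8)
The plan is to reduce this block-level statement to two facts: (i) a single product of a $k_2$-partial circulant block with a circulant block is again $k_2$-partial circulant, which is precisely Lemma \ref{productpcircu}, and (ii) the sum of finitely many $k_2$-partial circulant matrices of the same dimension is again $k_2$-partial circulant. Once both are in hand, the result follows by expanding $Q=BA$ at the block level and arguing blockwise.

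First I would carry out the block multiplication explicitly. Since $B$ is partitioned into $k_1 \times n_1$ blocks $B_{ij} \in \F_{q^m}^{k_2 \times n_2}$ and $A$ into $n_1 \times n_1$ blocks $A_{j\ell} \in \F_{q^m}^{n_2 \times n_2}$, ordinary block matrix multiplication yields a matrix $Q$ partitioned into $k_1 \times n_1$ blocks given by
\[ Q_{i\ell} = \sum_{j=1}^{n_1} B_{ij} A_{j\ell}, \qquad 1 \leq i \leq k_1, \ 1 \leq \ell \leq n_1, \]
each of size $k_2 \times n_2$. Hence it suffices to show that every $Q_{i\ell}$ is a $k_2$-partial circulant matrix, which by definition is exactly the assertion that $Q$ is a partial-circulant-block matrix. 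For each fixed $j$ the block $B_{ij}$ is $k_2$-partial circulant while $A_{j\ell}$ is a (full) circulant matrix of size $n_2 \times n_2$, so Lemma \ref{productpcircu} gives that every summand $B_{ij}A_{j\ell}$ is $k_2$-partial circulant. Thus each $Q_{i\ell}$ is a sum of $n_1$ matrices, all $k_2$-partial circulant and all of common size $k_2 \times n_2$.

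It then remains to verify closure under addition. Here I would use that a $k_2$-partial circulant matrix is precisely the top $k_2$ rows of the full $n_2 \times n_2$ circulant matrix generated by its first row, and that the generating map is $\F_{q^m}$-linear: $\circu{k_2}{\bm{a}} + \circu{k_2}{\bm{a}'} = \circu{k_2}{\bm{a}+\bm{a}'}$, since each row is a fixed cyclic shift of the generator and cyclic shifting commutes with addition. Consequently the sum defining $Q_{i\ell}$ is again $k_2$-partial circulant, completing the argument. I do not anticipate a serious obstacle: the only ingredient not handed to us by a cited lemma is this additive closure, and it is immediate from linearity of the cyclic-shift construction. The point to be careful about is the dimensional bookkeeping, ensuring the block product $Q_{i\ell}=\sum_j B_{ij}A_{j\ell}$ is correctly set up and that every summand shares the common shape $k_2 \times n_2$ so that closure under addition applies.
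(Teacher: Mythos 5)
Your proof is correct and follows essentially the same route as the paper: expand $Q=BA$ blockwise, apply Lemma \ref{productpcircu} to each summand $B_{ij}A_{j\ell}$, and conclude by closure of $k_2$-partial circulant matrices under addition. In fact you go slightly further than the paper, which asserts the additive closure without justification, whereas you supply the (correct) reason via linearity of the generator map $\circu{k_2}{\bm{a}} + \circu{k_2}{\bm{a}'} = \circu{k_2}{\bm{a}+\bm{a}'}$.
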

\begin{proof}
    Consider the matrix $Q= BA$:
    \begin{align*}
        Q = \left[ \begin{array}{ccc}
    Q_{11} & \ldots & Q_{1n_1}  \\
    \vdots & \ddots & \vdots \\
    Q_{k_11} & \ldots & Q_{k_1n_1}  \\
\end{array} \right] &= \left[ \begin{array}{ccc}
    B_{11} & \ldots & B_{1n_1}  \\
    \vdots & \ddots & \vdots \\
    B_{k_11} & \ldots & B_{k_1n_1}  \\
\end{array} \right] \left[ \begin{array}{ccc}
    A_{11} & \ldots & A_{1n_1}  \\
    \vdots & \ddots & \vdots \\
    A_{n_11} & \ldots & A_{n_1n_1}  \\
\end{array} \right] = BA \\
    Q_{ij} &= B_{i1} A_{1j} + B_{i2} A_{2j} + \ldots + B_{in_1} A_{n_1j} = \sum_{k_0 = 1}^{n_1} B_{i k_0} A_{k_0 j}.
    \end{align*}  
    It suffices to show that $Q_{ij}$ is a $k_2$-partial circulant matrix. By Lemma \ref{productpcircu}, each $B_{i k_0} A_{k_0 j}$ is a $k_2$-partial circulant matrix, and thus the sum of partial circulant matrices $Q_{ij}$ is a partial circulant matrix. 
\end{proof}

\medskip

Recall the structure of the matrix $G$ in (\ref{defG1G2}):

\medskip

\[
   G = G_1 \otimes G_2 = \left[ \begin{array}{ccc}
  g_{1,11} G_2 & \ldots & g_{1,1n_1} G_2 \\
  \vdots & \ddots & \vdots \\
  g_{1,k_11} G_2 & \ldots & g_{1,k_1n_1} G_2 \\
\end{array}   \right]. \]
If $G_2$ is of partial circulant structure, then the matrix $G$ is a partial-circulant-block matrix. Note that the discussion in Section \ref{modify-gab-kron} allows us to choose $G_1$ to be of random structure with full rank $k_1$.

\subsection{Construction of $X$} \label{ConstructX}

Since $G$ is of partial-circulant-block form, to optimize the public key size, we consider $X$ to be of partial-circulant-block form as well, i.e.
\medskip
\[  X = \left[ \begin{array}{ccc}
  X_{11} & \ldots & X_{1n_1} \\
  \vdots & \ddots & \vdots \\
  X_{k_11} & \ldots & X_{k_1n_1}  \\
\end{array}   \right]  \]
where each $X_{ij}$ is a partial circulant matrix with $\colrk_q (X_{ij}) = t_1$. Let $I$ be an information set of $G_1$. For $j \notin I$, the matrix $X_{ij}$'s can be chosen to be random $k_2$-partial circulant matrix.

For $j \in I$, the $X_{ij}$'s can be generated by the steps in Section \ref{Xchoice}. Let $\bm{\mu} = \bm{m}S = (\bm{\mu_1},\ldots,\bm{\mu_{k_1}})$ where $\bm{\mu_{i}} \in \F_{q^m}^{k_2}$. Since each $\colrk_q(X_{ij}) = t_1$, there exist $Y_{ij} \in \F_{q^m}^{k_2 \times t_1}$ $W_{ij} \in \F_q^{n_2 \times n_2}$ such that $X_{ij} = \left[ Y_{ij} \mid \bm{0}_{k_2 \times (n_2-t_1)} \right] W_{ij}$. We have
\begin{align*}
    \bm{m}S X_{C_j} &= (\bm{\mu_1},\ldots,\bm{\mu_{k_1}}) \left[ \begin{array}{c}
  X_{1j} \\
  \vdots \\
  X_{k_1j}  \\
\end{array}  \right] = \sum_{i=1}^{k_1} \bm{\mu_i} X_{ij} = \sum_{i=1}^{k_1} \bm{\mu_i} \left[ Y_{ij} \mid \bm{0}_{k_2 \times (n_2-t_1)} \right] W_{ij},
\end{align*}
which may gives us $t_1 \leq \rk(\bm{m}SX_{C_j}) $.

\medskip

\noindent To ensure $ \rk(\bm{m}SX_{C_j}) \leq t_1 $, we can consider $W'_j  = W_{i_1 j} = W_{i_2j}$ where $1 \leq i_1, i_2 \leq k_1$, then
\[ \bm{m}S X_{C_j}  =  \sum_{i=1}^{k_1} \bm{\mu_i} \left[ Y_{ij} \mid \bm{0}_{k_2 \times (n_2-t_1)} \right] W_{ij} = \left( \sum_{i=1}^{k_1} \bm{\mu_i} Y_{ij} \mid \bm{0}_{k_2 \times (n_2-t_1)} \right) W'_j \]
which gives us $\rk( \bm{m}S X_{C_j} ) \leq t_1$.

\subsection{Construction of $P$} \label{ConstructP}

In the original design of GabKron cryptosystem, $P$ is chosen over $\mathcal{V}$, an $\F_q$-subspace of $\F_{q^m}$ with dimension $\lambda$. Let $\{ \gamma_1,\ldots,\gamma_\lambda \}$ be a basis of space $\mathcal{V}$. Let $1 \leq \lambda' \leq \lambda$, $I = \{ j_1,\ldots,j_{k_1} \} \subseteq \{ 1,\ldots,n_1\}$ be an information set of $G_1$.

\medskip

For each $i \in I$, randomly select $\Gamma_{i,1},\ldots,\Gamma_{i,\lambda'} \ra \{ \gamma_1,\ldots,\gamma_\lambda\}$, and form a subspace $\mathcal{U}_i = \langle \Gamma_{i,1} ,\ldots,\Gamma_{i,\lambda'} \rangle \subseteq \mathcal{V}$. Then we can form a matrix $P_{C_i}$, whose components are in $\mathcal{U}_i$. This will ensure that $\rk_q (\bm{e} P_{C_i}) \leq \rk_q (\bm{e}) \times \lambda' = \lambda' t \leq \lambda t$. For $i \notin I$, we can construct a matrix $P_{C_i}$, whose components are taken from $\mathcal{V}$.

\medskip

To optimize the public key size, we choose the matrices $P_{ij}$ to be $n_2 \times n_2$ circulant matrices. Finally, check the singularity of matrix $P$ and ensure that it is invertible.

\subsection{Improved GabKron Cryptosystem}

We now give the specification for our proposed GabKron cryptosystem.

\medskip

\noindent $\Setup(\delta)$. For a given security level $\delta$, choose $q$, a power of prime, and positive integers $m,n,k,n_1,n_2,k_2,k_2,t,t_1,t_2,\lambda,\lambda'$ satisfying $n_2 = m$, $n=n_1n_2$, $k=k_1k_2$, $1 \leq k_1 \leq n_1$, $1 \leq k_2 \leq n_2$, $2 \leq \lambda' \leq \lambda$, $t_2 = \left\lfloor \frac{n_2-k_2}{2} \right\rfloor$ and $\lambda' t + t_1 \leq t_2$. Output parameters ${\sf param} = (q,m,n,k,n_1,n_2,k_2,k_2,t,t_1,t_2,\lambda,\lambda')$.

\medskip

\noindent $\Keygen(q,m,n,k,n_1,n_2,k_2,k_2,t,t_1,t_2,\lambda,\lambda')$. Randomly generate the following:
\begin{enumerate}
    \item A matrix $G_1 \ra \F_{q^m}^{k_1 \times n_1}$.
    \item A normal element $\alpha \ra \F_{q^m}$. Let $\bm{g}_2 = (\alpha^{[n_2-1]}, \alpha^{[n_2-2]}, \ldots, \alpha)$. Form a generator matrix $G_2 = \circu{k_2}{\bm{g_2}}$.
    \item A partial-circulant-block matrix $X$ generated by the steps in Section \ref{Xchoice} and \ref{ConstructX}.
    \item A circulant-block matrix $P$ generated by the steps in Section \ref{ConstructP}.
 \end{enumerate}
Compute $\Gpub = [G+X]P^{-1}$ and output the key pairs $\pk = \Gpub$ and $\sk = ( \alpha, P, G_1 )$.

\medskip

\noindent $\Encrypt(\bm{m},\pk)$. Let $\bm{m} = (\bm{m}_1,\ldots,\bm{m}_{k_1}) \in \F_{q^m}^{k} $ be a plaintext message where $\bm{m}_j \in \F_{q^m}^{k_2}$ for $ 1\leq j \leq k_1$. Randomly choose a vector $\bm{e}= (\bm{e}_1,\ldots,\bm{e}_{n_1}) \in \F_{q^m}^n$ such that $\rk(\bm{e}) = t$ and $\bm{e}_i \in \F_{q^m}^{n_2}$ for $1 \leq i \leq n_1$. Compute $\bm{c} = \bm{m}\Gpub + \bm{e}$ and output $\bm{c}$.

\medskip

\noindent $\Decrypt(\bm{c},\sk)$. Compute $\bm{c}' = \bm{c}P$ and perform $\mathfrak{D}_{\mathcal{C}}(\cdot)$, the decoding algorithm of the Gabidulin-Kronecker product code $\mathcal{C}$ on $\bm{c}P$ and output $\bm{m} = \mathfrak{D}_G (\bm{c}')$.

\medskip

\noindent \textbf{Correctness}. The correctness of our encryption scheme lies on the decoding capability of the Gabidulin-Kronecker product code $\mathcal{C}$ and the choice for the right scrambler $P$. In particular,

\medskip

\[ \bm{c}' = \bm{c} P = \bm{m}[G+X] + \bm{e}P, \quad \bm{c}'_i = \bm{m}(g_{1,1i} + \ldots + g_{1,k_1i}) G_2 + \bm{m}X_{C_i} + \bm{e}P_{C_i} \]
where $1 \leq i \leq n_1$. For each $i \in I$ an information set of $G_1$,
\[ \rk(\bm{m} X_{C_i} + \bm{e}P_{C_i}) \leq t_1 + \lambda' t \leq t_2. \]
Therefore we can compute $\bm{m}$ by the decoding algorithm of $\mathcal{C}$.

\subsection{Parameters for Improved GabKron Cryptosystem}

Before we present our new parameters, we first compute $\size_\pk$, the public key size, and $\size_\sk$, the secret key size.

\medskip

\noindent \textbf{Public Key Size}: Since $P$ is a circulant-block matrix, the inverse of $P$, by Lemma \ref{circublockinverse}, $P^{-1}$ is also a circulant-block matrix. Note that both $G$ and $X$ are partial-circulant-block matrix where each block is a $k_2$-partial circulant matrix, thus $G+X$ is also a partial-circulant-block matrix. By Lemma \ref{circublockproduct}, $\Gpub = [G+X]P^{-1}$ is a partial-circulant-block matrix. Therefore

\medskip

\[ \size_\pk = \frac{1}{8} k_1 n_1 n_2 m \text{ bytes}. \]

\medskip

\noindent \textbf{Secret Key Size}: Recall that the secret key is $(\alpha, P , G_1)$, $P$ is a circulant-block matrix. Note that $P$ can be represented by $n_1^2$ vectors of length $n_2$ each, and these vectors has rank $\lambda$. Therefore

\medskip

\[ \size_\sk = \frac{1}{8} \left( m + n_1^2 \lambda (m+n_2) + k_1 n_1 m \right) \text{ bytes}.\]

\medskip

\begin{table}[h]
  \centering
  \label{impro}
  \small
   \begin{tabular}{|c|cccc|ccccccc|cc|}
    \hline
      Scheme & $n_1$ & $k_1$ & $n_2$ & $k_2$ & $q$ & $m$ & $n$ & $k$ & $t$ & $t_1$ & $\lambda$ & $size_{pk}$ & Sec \\
      \hline
      new-GabKron-128 & 2 & 2 & 90 & 18 & 2 & 90 & 180 & 36 & 12 & 6 & 3 & 4050 & 128 \\
      new-GabKron-192 & 2 & 2 & 120 & 32 & 2 & 120 & 240 & 64 & 14 & 8 & 3 & 7200 & 192  \\
      new-GabKron-256 & 2 & 2 & 128 & 40 & 2 & 128 & 256 & 80 & 14 & 8 & 3 & 8192 & 256  \\
    \hline
  \end{tabular}
\end{table}

\end{document}